\documentclass[11pt,letterpaper]{article}
\def\withcolors{0}
\def\withnotes{0}
\usepackage{graphicx} 
\usepackage{parskip}
\usepackage{amssymb}
\usepackage{amsmath}
\usepackage{amsfonts}
\usepackage{bbm}
\usepackage{somestyle}
\usepackage{booktabs}
\usepackage[table]{xcolor}
\usepackage{pgf}
\usepackage{etoolbox}
\usepackage{palatino}
\usepackage{longtable}
\newcommand{\Vol}{\operatorname{Vol}}
\newcommand{\SRcell}[1]{%
  \ifstrequal{#1}{nan}
    {\cellcolor[HTML]{EEEEEE}\textemdash}
    {%
      \pgfmathsetmacro{\srscaled}
        {max(0,min(1,(#1-0.5)*2))}%
      \pgfmathtruncatemacro{\srchannel}
        {245*(1-\srscaled)}%
      \edef\SRapplycolor{%
        \noexpand\cellcolor[RGB]{%
          \srchannel,255,\srchannel
        }%
      }%
      \SRapplycolor #1%
    }%
}

\title{A simple and practical $o(\sqrt{n})$-time algorithm for shortest paths in power law graphs}
\date{August 2026}

\author{Jiaqi Mao\thanks{The University of Sydney. Email: \email{jmao0915@uni.sydney.edu.au}.}}

\begin{document}

\maketitle

\begin{abstract}
Computing shortest paths in large graphs is, and remains, a fundamental and practically motivated problem. While many algorithms were proposed to calculate shortest path between pairs of vertices efficiently, many of them (index-based methods) require substantial preprocessing, while others (traversal-based methods) have higher time complexity. In this paper, we propose and analyze \textsc{Pruned Bidirectional Search (PBS)}, a simple sublinear approximation algorithm for power-law graphs with parameter $\beta\in[2,3)$: our algorithm does not require any preprocessing, yet exhibits performance comparable to light index-based algorithms (of linear or sublinear index size): that is, \textsc{PBS} runs in time $O(n^{(1-1/\log\log n)/2})$ and, with high probability, returns a path with length within $\frac{41}{32}$ of the shortest path. Moreover, if one does allow a $n^{\Theta(2-1/\log\log n)}$-time preprocessing step, its query time improves to $n^{\Theta(1/\log\log n})$. We complement our theoretical results by experiments on both real-world and synthetic power-law graphs, which show that \textsc{PBS} is typically $1.84\times$--$7.76\times$ times faster than existing alternatives, while achieving an approximation ratio at most 1.05.
\end{abstract}

\section{Introduction}
Computing shortest path in large graphs is one of the most fundamental problems in graph algorithms, with an extensive and rich theory both in theory and practice. Over the past decades, many algorithms were proposed to calculate either exact shortest path or an good approximation. 

To compute the shortest path $\mathcal{P}(s,t)$ for source node $s$ and target node $t$, the simplest approach is run a breadth first search (BFS) from $s$ until $t$ is discovered. However, its expected running time is linear in number of vertices $n$, which is computationally prohibitive on large real-world networks. One popular alternative to BFS is Bidirectional BFS (BiBFS), which runs BFS on both $s$ and $t$ until their search tree intersects \cite{pohl1969bidirectional}. 
BiBFS is observed to have good expected running time on random inputs \cite{luby1989bidirectional}. 

Several variants of BiBFS were subsequently developed, to improve its performance on specific types of graphs. For instance, \cite{alon2024sublinear} shows that BiBFS finds shortest paths efficiently (namely, in time $O(\sqrt{n})$) for most pairs of vertices in expander graphs, and also combine BFS with random walks to obtain short paths for arbitrary pairs.

While this algorithm is suitable for any graph as it does not require any prior knowledge on the graph structure, it is not as fast as other algorithms which first perform some preprocessing of the input graph. An example of such algorithms is \emph{Pruned Landmark Labeling} \cite{akiba2013pruned}, which precomputes distance labels by running a sequence of pruned BFS searches, yielding a $O(1)$ running time (for each $s$-$t$ shortest path computation) after a $O(n^2)$-time preprocessing stage. Of course, as the graphs become larger, this $O(n^2)$ overhead quickly becomes computationally prohibitive; further, this approach cannot be applied to dynamic graphs.

One avenue for improvement is the fact that many real-world networks, including communication, social, and web graphs, exhibit a highly non-uniform degree distribution, a structure which could be leveraged to design faster algorithms. Empirical studies of Internet topology found that the number of vertices with a given degree approximately follows a \textit{power law} (see~\cref{{defn:powerlaw}}) \cite{faloutsos1999powerlaw}, and subsequent studies showed that almost every real world graphs' power law has parameter $\beta$ in the range $[2,3)$ \cite{Artico2020PowerLaw}. Based on this observation, Basu et al.~\cite{basu2025sublinear} recently proposed \textsc{Wormhole}, an algorithm that lies between traversal-based and index-based approaches. At a high level, \textsc{Wormhole} constructs an index only containing a sublinear number of high-degree vertices (the ``core''), to get an acceptable preprocessing time. A query first searches from $s$ and $t$ towards the core, and then connects the two searches by finding a path inside the core. Their analysis shows that one can use $o(n)$ preprocessing queries and $n^{o(1)}$ queries for each shortest-path query, while ensuring thepath returned is within an additive error of $O(\log\log n)$ of optimal (with high probability over the draw of the input graph). While this provides a practical trade off strategy between BiBFS and full distance indexes, \textsc{Wormhole} still requires a preprocessing phase, and is designed to answer multiple queries using the same stored core. 

Inspired by \textsc{Wormhole}, we propose \textsc{Pruned Bidirectional Search (PBS)} (\cref{alg:PBS}), a simpler sublinear algorithm that can be run without preprocessing, with comparable running time. It beat current $O(\sqrt{n})$ traversal-based approach, achieving $o(\sqrt{n})$. Via a more careful (yet quite clean and simple) analysis, we are able to show an $\frac{41}{32}$ multiplicative approximation guarantee rather than an additive $O(\log\log n)$. Our algorithm is not only conceptually simple, it is also easy to implement, as one only need to slightly modify BiBFS on vertex exploring~--~and achieves significantly faster running time in practice. Our main result can be stated as follows:
\begin{theorem}[Main theoretical result]\label{theo:main}
    Fix any $\beta \in (2,3)$. The algorithm \textsc{Pruned Bidirectional Search} (\cref{alg:PBS}), given query access to an $n$-vertex graph $G=(V,E)$ drawn from a power law with parameter $\beta$, as well as two vertices $s,t\in V$, runs in time $O(n^{(1-1/\log\log n)/2})$, and returns with high probability (over the choice of $G$) an $s$-$t$ path within $\frac{41}{32}$ of the shortest path.
\end{theorem}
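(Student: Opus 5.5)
We have not yet seen the algorithm, so the plan assumes \textsc{PBS} runs BiBFS from $s$ and $t$ but only expands vertices whose degree lies below a threshold $\Delta$, or stops once the frontier touches a vertex of degree at least $\Delta$. The two searches are then joined through high-degree vertices. We work in the Chung--Lu power-law model of \cref{defn:powerlaw}, with weights $w_v$ and edge probability $\min(1,w_uw_v/W)$. The threshold will be $\Delta = n^{(1-1/\log\log n)/2}$, or a value tied to it.

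\textbf{Step 1: structure of the core.} Let $C_i$ be the set of vertices of weight at least $n^{1/2}\cdot n^{-\epsilon_i}$ for suitable $\epsilon_i$. Standard computations for $\beta\in(2,3)$ give two facts w.h.p. First, $C_0=\{v: w_v\ge \sqrt n\}$ is a clique or has diameter $\le 2$. Second, every vertex of weight $w$ has a neighbour of weight roughly $w^{1/(\beta-2)}$. The second fact comes from the expected number of such neighbours, $\sum_{u:w_u\ge w^{1/(\beta-2)}} w w_u/W$, which is $\Omega(\log n)$ with room to spare. Iterating it gives the classical $\frac{\log\log n}{|\log(\beta-2)|}$ ladder from weight $\Delta$ into $C_0$. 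With the threshold $n^{-1/(2\log\log n)}$ below $\sqrt n$, only $O(1)$ ladder steps are needed from weight $\Delta$ to weight $\sqrt n$.

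\textbf{Step 2: running time.} The pruned search never expands a vertex of degree $>\Delta$. Under the Chung--Lu local-tree-like structure, the explored ball before the first hit of a degree-$\ge\Delta$ vertex has expected size governed by the size-biased degree distribution truncated at $\Delta$. That size is roughly $\Delta^{3-\beta}$ per layer, and the number of layers until a high-degree hit is constant. Summing and applying Markov/Chernoff over the w.h.p. event bounds the total work by $O(\Delta)=O(n^{(1-1/\log\log n)/2})$. If the algorithm instead caps the explored volume by a budget, the time bound is immediate, and the work shifts to the correctness argument.

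\textbf{Step 3: approximation.} Let $d=d(s,t)$. There are two cases.
\begin{itemize}
\item If the true shortest path avoids all vertices of degree $>\Delta$, the two pruned searches meet as in ordinary BiBFS and return an exact path, using the symmetric budget argument.
\item Otherwise, the pruned searches from $s$ and $t$ reach high-degree hubs $h_s,h_t$ at depths $a\le$ the distance from $s$ to the shortest path's first hub, and similarly $b$. The returned path has length $a+b+O(1)$, where the constant comes from Step 1: hub $\to$ core $\to$ hub takes at most $k$ edges.
\end{itemize}
Since w.h.p. in power-law graphs $d\ge a+b$ up to constants, the ratio is $(a+b+k)/d$. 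The $41/32$ constant presumably comes from optimising over small $d$. For $d\ge$ some $d_0$, the additive constant is absorbed; for $d<d_0$, a finite case check is made using the fact that short paths are found exactly.

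\textbf{Main obstacle.} I expect the hardest part to be this last combination: turning the additive slack from the core detour into the exact multiplicative $\frac{41}{32}$. This requires a careful case analysis on $d$ and on where the shortest path meets high-degree vertices. It must also be shown that when $d$ is small, either BiBFS finds the path exactly or the detour costs at most a proportional amount. I would first tabulate the worst ratios for $d=1,\dots,8$ to locate the extremal configuration.
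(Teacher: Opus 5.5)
\textbf{There is a genuine gap.} It starts with the algorithm you assumed, which is not \textsc{PBS}, and this already undermines Step~2.

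\textsc{PBS} does not cut at a degree near $\sqrt n$. Its core is $S_C=\{v:\deg(v)\ge n^{\gamma}\}$ with $\gamma=1/\log\log n$, and its pruning rule is by \emph{volume}. When $u$ is expanded, the highest-degree neighbours carrying $\frac34$ of $D_{\delta(u)}$ go to HQ, and only HQ is expanded while it is nonempty. Once both searches hit $S_C$, a BiBFS restricted to $S_C$ joins them.

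The paper's time bound splits along these two phases:
\begin{itemize}
\item Reaching the core costs $O(n^{\gamma(\beta-1)})=n^{\Theta(1/\log\log n)}$, because each explored edge lands in $S_C$ with probability $\Omega(n^{-\gamma(\beta-1)})$.
\item The $n^{(1-\gamma)/2}$ term comes from the core BiBFS: it costs $O(\sqrt{|S_C|})$, with $|S_C|=O(n^{1-\gamma(\beta-1)})\le O(n^{1-\gamma})$.
\end{itemize}
Your Step~2 costs only the search up to the first hub. It never accounts for the joining phase, which is exactly where the stated bound comes from.

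\textbf{The decisive gap is Step~3.} A bound of the form ``returned length $\le a+b+k\le \operatorname{dist}(s,t)+k$'' is additive. Converting it to $\frac{41}{32}$ by a finite check over small distances cannot work. Since $\frac{41}{32}<\frac43$, a single extra hop at $\operatorname{dist}(s,t)=3$ already violates the ratio, and nothing in your argument makes short paths come out exact.

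Under your own stopping rule they are not exact. Suppose $s$--$x$--$y$--$t$ is a shortest path with $x,y$ of low degree, and $s\sim h_s$, $t\sim h_t$ for non-adjacent hubs $h_s,h_t$. Both searches stop after one level without meeting, and the returned path has length $2+\operatorname{dist}(h_s,h_t)\ge 4$.

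The paper obtains $\frac{41}{32}$ by a mechanism your proposal lacks, one in which the error scales with the path length:
\begin{itemize}
\item Claim~\ref{claim:coreroute} says that inside the core at least $\frac34$ of the neighbourhood volume lies in $S_C$. This lets both phases be treated as a single $\frac34$-pruned search.
\item Consider an expansion of $u$ with HQ-set $A$ and LQ-set $B$. An error requires a BFS grown from $t$ to reach $B$ but not $A$, which happens with relative probability about $\frac1{16}$.
\item Conditionally on that, $\mathbb{E}[\operatorname{dist}(A,T_{i+1})]\le\frac94$. This uses distance $1$ with probability about $\frac34$ via Lemma~\ref{lem:set-edge}, plus a geometric tail.
\item Together these give expected error at most $\frac{9}{64}$ per hop, which the paper turns into the factor $1+\frac{9}{32}=\frac{41}{32}$.
\end{itemize}
That computation is itself heuristic: it uses approximate equalities and bounds expected rather than worst-case error. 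Still, some per-hop, volume-based accounting of this kind is the missing idea. The additive hub-detour route cannot yield a uniform multiplicative $\frac{41}{32}$.
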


While the above already guarantees an $o(\sqrt{n})$ time and query (with high probability) algorithm for shortest path, we observe that this can be improved much further when allowing for preprocessing of the input graph, thus enabling batch computations for various shortest paths:
\begin{theorem}[Shortest paths after preprocessing]\label{theo:preproc}
    Fix any $\beta \in [2,3)$. Given query access to an $n$-vertex graph $G=(V,E)$ drawn from a power law with parameter $\beta$ and an $O(n^{2-1/\log\log n}))$-time preprocessing phase, the algorithm \textsc{Pruned Bidirectional Search} (\cref{alg:PBS}), on input two vertices $s,t\in V$, runs in time $n^{\Theta(1/\log\log n})$, and returns with high probability (over the choice of $G$) an $s$-$t$ path within $\frac{41}{32}$ of the shortest path.
\end{theorem}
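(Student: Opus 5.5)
The plan is to reduce \cref{theo:preproc} to the analysis behind \cref{theo:main}. The preprocessing replaces the expensive ``meet in the core'' step by a table lookup. Recall the structure of \textsc{PBS}: from $s$ and from $t$ we run a pruned BFS that expands only vertices of degree below a threshold $D$ (the pruning), stopping as soon as the two searches meet or each search reaches a ``core'' vertex, i.e.\ one of degree at least $D$. When the searches do not meet directly, the path is completed through the core. In the analysis of \cref{theo:main}, $D$ is chosen so that the total work of the two searches balances against the cost of connecting inside the core, which yields $n^{(1-1/\log\log n)/2}$. Here we instead choose a much smaller threshold, $D = n^{\Theta(1/\log\log n)}$. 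We then precompute, for all pairs of core vertices, the distances and next-hop pointers.

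\textbf{Step 1 (core size and preprocessing cost).} Under the power law with $\beta\in[2,3)$, the number of vertices of degree at least $D$ is $\Theta(n D^{1-\beta})$ w.h.p. Their total degree is $O(n D^{2-\beta})$. Running a BFS in $G$ from every core vertex costs $O(|C|\cdot |E|)$. Since $|E|=O(n)$ for $\beta>2$, this is $O(n^2 D^{1-\beta}) = O(n^{2-\Theta(1/\log\log n)})$. For $\beta=2$ we have an extra logarithmic factor, which is absorbed in the exponent. We store, for each core vertex, the closest core vertex and the all-pairs core distances.

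\textbf{Step 2 (query time).} The pruned search from $s$ only expands vertices of degree below $D$. I would reuse the lemma from the proof of \cref{theo:main} that, w.h.p., such a search reaches a core vertex within $O(\log\log n)$ levels. Otherwise it terminates by meeting the other side. The number of explored vertices is then at most $D^{O(1)}$, since each level multiplies by at most $D$ and only $O(1)$ effective levels are needed before the degree doubling hits the core (the ``$\beta<3$ ultra-small world'' step). The obstacle is exactly this count: a naive bound $D^{\log\log n}$ gives $n^{\Theta(1)}$. To avoid it, I would exploit that neighbors of degree $d$ vertices have degree $\gtrsim d^{1/(\beta-2)}$ with good probability. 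This way, after $O(1)$ levels from a low-degree start some neighbor already exceeds $D$, keeping the cost at $D^{O(1)}=n^{\Theta(1/\log\log n)}$. The core-to-core connection is then an $O(1)$ table lookup plus path reconstruction, whose length is $O(\log\log n)$.

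\textbf{Step 3 (approximation).} The returned path is $s\to c_s\to c_t\to t$, with the middle leg a true core shortest path. The $\frac{41}{32}$ bound should follow verbatim from the approximation argument for \cref{theo:main}. That argument only uses the fact that whenever the searches do not meet, both have reached the core and the core segment is exact. The one place to check is whether that argument used the specific value of $D$, e.g.\ to guarantee that shortest $s$-$t$ paths of length above some constant pass near the core. If so, I would re-verify it with the smaller $D$: with a lower threshold the core is larger, so paths hit it sooner, and the inequality only improves. I expect Step 2's exploration-size bound to be the main technical point.
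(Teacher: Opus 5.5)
Your overall decomposition matches the paper's: build an exact distance index on the core, replace the core BiBFS by a table lookup, and observe that the returned path does not change. The paper runs Pruned Landmark Labeling on the $O(n^{1-\gamma})$-vertex core in $O(n^{2-\gamma})$ time. Your BFS from every core vertex costs $O(|S_C|\cdot|E|)=O(n^{2-\gamma(\beta-1)})\le O(n^{2-\gamma})$ for $\beta>2$, so it is an equally valid index, and running it in all of $G$ can only shorten the middle leg.

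However, Step 2 rests on a misreading of \cref{theo:main}. The threshold in \cref{alg:PBS} is $n^{\gamma}$ with $\gamma=1/\log\log n$ in both theorems, so there is nothing to re-choose. Nor is $n^{(1-\gamma)/2}$ a balance point: it is the $O(\sqrt{|S_C|})$ cost of BiBFS inside the core. The pruning phase was already bounded in \cref{ssec:time} by $O(n^{\gamma(\beta-1)})=n^{\Theta(1/\log\log n)}$. The paper's proof simply reuses that bound and notes that preprocessing removes the $\sqrt{|S_C|}$ term. For the same reason your Step 3 worry about the value of $D$ is moot. Your claim that a smaller threshold ``only improves'' the inequality is unsupported, but you do not need it.

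The new argument you sketch for Step 2 would fail as written. There is no level-count lemma in the paper; \cref{ssec:time} bounds the number of explored edges directly. Your claim that ``after $O(1)$ levels from a low-degree start some neighbor already exceeds $D$'' is false. Under the growth $d\mapsto d^{1/(\beta-2)}$, a search from constant degree $d_0$ reaches degree about $d_0^{(\beta-2)^{-k}}$ after $k$ levels. Exceeding $D=n^{1/\log\log n}$ then needs $(\beta-2)^{-k}\gtrsim \log n/\log\log n$, i.e.\ $k=\Theta(\log\log n)$. Combined with ``each level multiplies by at most $D$'', this gives back exactly the $D^{\Theta(\log\log n)}=n^{\Theta(1)}$ bound you were trying to avoid.

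What actually rescues a level-based count is that the branching factor at level $j$ is roughly the degree $d_j$ reached there, not $D$. Heuristically, within the same degree-growth picture you invoke, $d_j=d_{k-1}^{(\beta-2)^{k-1-j}}$ with $d_{k-1}<D$. This gives $\prod_{j<k}d_j\le d_{k-1}^{1/(3-\beta)}< D^{1/(3-\beta)}=n^{O(1/\log\log n)}$ for fixed $\beta<3$. Simpler still is the paper's level-free argument: each explored edge lands in $S_C$ with probability about $\Vol(S_C)/W=n^{-O(\gamma)}$, so $n^{O(\gamma)}$ explored edges suffice in expectation.
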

Finally, as mentioned above, \textsc{PBS} performs really well in practice, and showed amazing performance compared to \textsc{BiBFS} and \textsc{Wormhole}. We summarize the results of our experiments on real world graphs and synthetic power law graphs (detailed in \cref{sec:experiments}) in the following claim:
\begin{claim}[Performance on real graphs]
\textsc{PrunedBidirectionalSearch} is faster than BiBFS by $1.84\times$--$7.76\times$, while retaining a
two-hop success rate of at least $99.5\%$ and an average multiplicative error of at most $1.05$.
\end{claim}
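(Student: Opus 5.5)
This claim is empirical rather than mathematical, so the "proof" I would give is an experimental protocol whose outcome establishes the three quantitative assertions: the speedup range $1.84\times$--$7.76\times$ over \textsc{BiBFS}, the two-hop success rate of at least $99.5\%$, and the average multiplicative error of at most $1.05$. Theorem~\ref{theo:main} explains why a speedup should exist asymptotically. It does not, however, pin down constants on graphs of realistic size, nor the empirical approximation ratio, which should be much better than the worst-case $\frac{41}{32}$. The argument therefore has to be carried out in \cref{sec:experiments}.

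\textbf{Setup.} I would first fix a benchmark suite. This consists of standard real-world networks with heavy-tailed degree distributions (social, web, communication, collaboration graphs), together with synthetic power-law graphs generated from the model of \cref{defn:powerlaw}, for several $\beta\in[2,3)$ and increasing $n$. For each graph I would estimate $\beta$ by a standard maximum-likelihood fit, to confirm it lies in the regime covered by Theorem~\ref{theo:main}.

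\textbf{Queries and measurements.} I would sample a large number of source--target pairs $(s,t)$ uniformly at random from the largest connected component. All three methods (\textsc{PBS}, \textsc{BiBFS}, \textsc{Wormhole}) would be implemented in the same language and on the same adjacency-list representation, so that timing differences reflect the algorithms rather than engineering. For each query I would record three quantities:
\begin{enumerate}
\item the wall-clock time, with \textsc{Wormhole}'s preprocessing reported separately;
\item the returned path length $\ell_{\textsc{PBS}}(s,t)$;
\item the exact distance $d(s,t)$, computed by \textsc{BiBFS}, which is exact.
\end{enumerate}
The speedup on each graph is the ratio of mean (or median) query times. The multiplicative error is the average of $\ell_{\textsc{PBS}}(s,t)/d(s,t)$. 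The ``two-hop success rate'' is the fraction of queries on which \textsc{PBS} returns a path of length at most $d(s,t)+2$, or whatever precise meaning the experimental section assigns to it. Taking the minimum and maximum speedup over the suite gives the interval $1.84$--$7.76$. Taking the worst graph for the other two statistics gives the $99.5\%$ and $1.05$ bounds.

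\textbf{Main obstacle.} I expect the hardest part to be making the speedup numbers robust rather than artefacts of the setup. Queries with very small $d(s,t)$ terminate almost immediately for both algorithms and can dominate averages. Cache effects and the pruning threshold (the degree cutoff used in \cref{alg:PBS}) can shift constants considerably. I would address this in three ways. First, fix the pruning parameter as prescribed by the analysis rather than tuning it per graph. Second, report medians alongside means, with confidence intervals over repeated runs. Third, check on the synthetic family that the measured speedup grows with $n$ as predicted by the $n^{(1-1/\log\log n)/2}$ bound, which links the empirical claim back to Theorem~\ref{theo:main}.
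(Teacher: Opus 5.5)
Your protocol is essentially the paper's: it runs \textsc{PBS}, \textsc{Wormhole}, and \textsc{BiBFS} (used as exact ground truth) on five SNAP networks and three synthetic power-law graphs with $n=10^6$ and $\beta\in\{2.0,2.5,2.9\}$, with 1,000 random queries per graph. It reads the $1.84\times$--$7.76\times$ range off ratios of total (equivalently, mean) query time, and the $99.5\%$ and $1.05$ bounds off the worst graph in the results table. Despite the claim's title, the extreme values ($7.76\times$ on $\beta=2.5$, and $99.5\%$ and $1.050$ on $\beta=2.9$) all come from the synthetic graphs, so pooling real and synthetic instances before taking the min/max, as you do, is necessary. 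Your extra safeguards (fitted $\beta$, medians with confidence intervals, scaling in $n$) go beyond what the paper reports but do not change the argument.
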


\section{Preliminaries}
In what follows, an (undirected) graph $G$ is defined by an ordered pair $G = (V, E)$: unless specified otherwise, we write $n=|V|$ for the number of vertices. We let $\delta(v)=\{u\in V:(u,v)\in E\}$.

\begin{definition}[Power Law]\label{defn:powerlaw}
Given a graph $ G = (V,E)$ with $n$ vertices and $m$ edges, with constant $\beta$, it is said to follow power law if

$$ |\{\deg(v) = k\ |\ v \in V \}| \varpropto 1/k^\beta $$
\end{definition}

To analyze approximation algorithms performance on power-law graphs, we follow the random graph generation model proposed by Chung and Lu \cite{pnas/2002/CL}. Under this model, each vertex $v_i$ is assigned a weight $w_i$. Without loss of generality, we can write all weights as a non-decreasing weight sequence $\mathcal{W} = \{w_1,w_2, ..., w_n\}$. For each pair of vertices $(v_i,v_j)$, an edge is formed between them independently with probability
\[
    p_{ij}=\frac{w_iw_j}{W},
    \qquad
    W=\sum_{i=1}^n w_i,
\]
where $\max_i w_i^2\leq W$. Under such setting, \textit{Volume} of a vertex set $S\subseteq V$ is defined by $\Vol(S) = \sum_{u \in S}, w_u$, and define second order average degree as
\[
    \Vol_k(S)=\sum_{v_i\in S}w_i^k,
    \qquad
    \Vol(S)=\Vol_1(S),
    \qquad
    \widetilde d=\frac{\Vol_2(V)}{\Vol(V)}.
\]

\begin{definition}[Power-law sequence]\label{defn:powerlaw}
An expected degree sequence follows a power law with exponent $\beta$
if
\[
    |\{v_i:w_i\geq x\}|=\Theta(nx^{1-\beta})
\]
\end{definition}

Then we will introduce some important results from Chung and Lu \cite{Lu2002ProbabilisticMI, pnas/2002/CL}, which will be used in our theoretical analysis.

\begin{lemma}[Edges between vertex sets, Lemma 9 in \cite{Lu2002ProbabilisticMI}]\label{lem:set-edge}
Let $S,T\subseteq V$ be disjoint. For every $c>0$, if
\[
    \frac{\Vol(S)\Vol(T)}{W}\geq c,
\]
then
\[
    \Pr\!\left(\operatorname{dist}(S,T)>1\right)\leq e^{-c}.
\]
\end{lemma}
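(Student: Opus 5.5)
Since $S$ and $T$ are disjoint, the event $\operatorname{dist}(S,T)>1$ is exactly the event that no edge of $G$ joins a vertex of $S$ to a vertex of $T$. In the Chung--Lu model every pair $(v_i,v_j)$ with $v_i\in S$, $v_j\in T$ is a distinct unordered pair. Disjointness ensures that no pair is counted twice and that no self-loop is involved. Each such pair is present independently with probability $p_{ij}=w_iw_j/W$, where $p_{ij}\le 1$ because $\max_i w_i^2\le W$. So the plan is to write the probability as a product over these independent non-edge events and bound it with $1-x\le e^{-x}$:
\[
\Pr\!\left(\operatorname{dist}(S,T)>1\right)=\prod_{v_i\in S}\prod_{v_j\in T}\Bigl(1-\frac{w_iw_j}{W}\Bigr)\le \exp\Bigl(-\sum_{v_i\in S}\sum_{v_j\in T}\frac{w_iw_j}{W}\Bigr).
\]

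Next, the double sum factors as $\bigl(\sum_{v_i\in S}w_i\bigr)\bigl(\sum_{v_j\in T}w_j\bigr)/W=\Vol(S)\Vol(T)/W$. Applying the hypothesis $\Vol(S)\Vol(T)/W\ge c$ then bounds the exponential by $e^{-c}$, which is the claim.

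The proof has no real obstacle. The only point needing care is the independence and range of the events. Each pair must contribute exactly one factor, which is where disjointness of $S$ and $T$ is used, and each factor $1-p_{ij}$ must lie in $[0,1]$, which follows from $w_iw_j\le\max_k w_k^2\le W$. The bound $1-x\le e^{-x}$ then holds for every factor. If the model instead allowed self-loops or pairs counted as ordered, the same computation would go through with at most a constant adjustment. With $S\cap T=\emptyset$ as assumed, no adjustment is needed.
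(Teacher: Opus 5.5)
Your proof is correct and is the standard argument. The paper gives no proof of its own; it only cites Chung--Lu, whose proof is this same computation $\Pr(\operatorname{dist}(S,T)>1)=\prod_{v_i\in S,\,v_j\in T}(1-w_iw_j/W)\le e^{-\Vol(S)\Vol(T)/W}\le e^{-c}$. You also handle the two points of care correctly: disjointness means each unordered pair is counted once, and $p_{ij}\le 1$ keeps every factor nonnegative so the termwise bounds multiply, with $\Vol$ read as the weight volume $\sum_{v_i\in S} w_i$ as in the preliminaries.
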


For $2<\beta<3$, let
\[
    \gamma=\frac{1}{\log\log n},
    \qquad
    t=n^\gamma,
    \qquad
    S_C=\{v_i:w_i\geq t\}.
\]
We call $S_C$ the core of $G$.

\begin{lemma}[Core diameter, Theorem 4 in \cite{pnas/2002/CL}] \label{lem:core}
For every fixed $2<\beta<3$, the core satisfies
\[
    |S_C|=\Theta(nt^{1-\beta})
    =n^{1-\gamma(\beta-1)+o(1)}
\]
and, with high probability, $S_C$ is completely connected and 
\[
    \operatorname{diam}(G[S_C])=O(\log\log n).
\]
\end{lemma}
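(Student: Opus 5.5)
The plan follows Chung and Lu's argument in \cite{pnas/2002/CL} and has two parts, one for $|S_C|$ and one for the structure of $G[S_C]$.

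\textbf{Size of the core.} Apply \cref{defn:powerlaw} at $x=t$ to get $|S_C|=|\{v_i:w_i\ge t\}|=\Theta(nt^{1-\beta})$. Substituting $t=n^{\gamma}$ gives $nt^{1-\beta}=n^{1-\gamma(\beta-1)}$. The constant hidden in $\Theta(\cdot)$ equals $n^{o(1)}$, which gives the stated form.

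\textbf{Volume bounds.} I need two facts before turning to the structure.
\begin{itemize}
\item $W=\Theta(n)$, because $\beta>2$ makes the average weight bounded.
\item For any $x$, $\Vol(\{v:w_v\ge x\})=\Theta(nx^{2-\beta})$. This follows by summing the tail counts dyadically: $\sum_j 2^{j}x\cdot n(2^jx)^{1-\beta}$.
\end{itemize}

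\textbf{Layered core and short paths.} Define thresholds $x_0=\max_i w_i\approx n^{1/(\beta-1)}$ and $x_{k+1}=x_k^{\,\alpha}$, with $\alpha<1$ chosen so that the layers shrink doubly exponentially. A natural choice is $\alpha=(\beta-2)/(\ldots)$, tuned as in Chung and Lu. Let $L_k=\{v:w_v\ge x_k\}$.

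First, the top layer $L_0$ is a clique whp. Its vertices have weight $\Theta(\sqrt{W})$ up to constants, so $p_{ij}=\Omega(1)$. I would fix $x_0$ a bit lower, at $\sqrt{W\log n}$, so that $p_{ij}$ is capped near $1$ and a union bound applies.

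Second, the layers are linked. Every vertex $v\in L_{k+1}$ has a neighbour in $L_k$ whp. Its expected number of neighbours there is $w_v\Vol(L_k)/W\ge x_{k+1}x_k^{2-\beta}$. By choosing $\alpha$ so that $x_{k+1}x_k^{2-\beta}\ge C\log n$, this probability is at least $1-n^{-C}$, either directly or via \cref{lem:set-edge} with $S=\{v\}$.

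Finally, take a union bound over the vertices and the $O(\log\log n)$ layers. Since $x_k$ decreases doubly exponentially in $k$, reaching $t=n^{1/\log\log n}$ from $n^{1/(\beta-1)}$ takes $K=O(\log\log n)$ layers. Any two core vertices then reach $L_0$ within $K$ hops each and meet there, so $\operatorname{diam}(G[S_C])\le 2K+1=O(\log\log n)$. In particular, $G[S_C]$ is connected, which I take to be the meaning of ``completely connected''.

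\textbf{Main obstacle.} The difficulty is the lower end of the recursion. At $x_k\approx t$, the quantity $x_{k+1}x_k^{2-\beta}$ must still dominate $\log n$. Here $t=n^{1/\log\log n}$ beats any polylog, so the requirement reduces to checking that $\alpha$ can be chosen with $\alpha+2-\beta>0$, for example $\alpha=(\beta-1)/2$ or similar. I would also need to confirm that the upward paths use only vertices of weight at least $t$, which holds by construction.
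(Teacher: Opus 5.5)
Your size estimate, the volume bound $\Vol(\{w\ge x\})=\Theta(nx^{2-\beta})$, the linking step between consecutive layers (with $\beta-2<\alpha<1$ and the last layer cut exactly at $t$), and the final union bound are fine. The gap is the top layer.

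The paper does not prove this lemma; it cites Chung--Lu. Both state it under $\max_i w_i^2\le W$. That constraint is what makes $p_{ij}=w_iw_j/W$ a probability, and it is also the setting of \cref{lem:set-edge}. So no weight is anywhere near $n^{1/(\beta-1)}$, and your $L_0=\{w\ge\sqrt{W\log n}\}$ is empty.

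Moving $x_0$ into the admissible range does not help. For a vertex of weight about $x_1=x_0^{\alpha}$ to attach to $L_0$ whp, you need $\Vol(L_0)\gtrsim W\log n/x_1$. Hence $|L_0|\ge\Vol(L_0)/\sqrt W\gtrsim W^{(1-\alpha)/2}\log n$, which is polynomially large. But every $p_{ij}\le 1$, and generically the $p_{ij}$ are bounded away from $1$, so such an $L_0$ is a clique only with probability $e^{-\Omega(|L_0|^2)}$.

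Your argument does work in the capped model $p_{ij}=\min\{1,w_iw_j/W\}$ with weights up to $n^{1/(\beta-1)}$. There $L_0$ is a clique deterministically, and you even get $O(\log\log\log n)$ layers, since the exponents shrink geometrically. But that is a different model, and you would have to adopt it explicitly.

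Within the stated model, the simplest repair is to drop the hierarchy and use expansion inside the core. Every $v\in S_C$ has expected degree into $S_C$ at least $t(\Vol(S_C)-\sqrt W)/W=\Theta(t^{3-\beta})=n^{\Theta(1/\log\log n)}\gg\log n$. Grow BFS balls inside $G[S_C]$ from $u$ and from $v$, exposing only edges leaving the current frontier $A$.

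While $\Vol(A)\le\sqrt{3W\log n}$, the core vertices $z$ with $w_z\le W/\Vol(A)$ still carry volume $\Theta(nt^{2-\beta})$, and each is hit with probability $\Omega(w_z\Vol(A)/W)$. By Chernoff, each round therefore multiplies the frontier volume by $\Omega(t^{3-\beta})$ with probability $1-n^{-\omega(1)}$. After at most $(1+o(1))\frac{\log\log n}{2(3-\beta)}$ rounds, both frontiers have volume at least $\sqrt{3W\log n}$. The edges between the two frontiers are still unexposed, so \cref{lem:set-edge} joins them with probability at least $1-n^{-3}$.

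A union bound over pairs then gives connectivity and $\operatorname{diam}(G[S_C])\le(1+o(1))\frac{\log\log n}{3-\beta}$. This matches the $\log\log n=\log n/\log t$ in the statement.

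If you would rather keep your layers, replace the clique by ``any two vertices of $L_0$ have a common neighbour in $S_C$''. These events are independent over $z$, with mean $w_uw_v\Vol_2(S_C)/W^2$. This version, however, needs a lower bound on the maximum weight, which the statement does not supply.
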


\paragraph{Table of Notations.}

\begin{longtable}{@{}p{0.24\textwidth}p{0.71\textwidth}@{}}
\label{tab:notation}\\
\toprule
Notation & Meaning \\
\midrule
\endfirsthead

\toprule
Notation & Meaning \\
\midrule
\endhead

\bottomrule
\endfoot

$G=(V,E)$
& An undirected graph with vertex set $V$ and edge set $E$. \\

$\deg(v)$
& Degree of vertex $v$. \\

$\delta(v)$
& Set of neighbours of vertex $v$. \\

$\mathcal{P}(u,v)$
& A path from $u$ to $v$. \\

$\dist(u,v)$
& Shortest-path distance between vertices $u$ and $v$. \\

$\dist(A,v)$
& Distance from a vertex set $A$ to $v$, defined as
$\min_{u\in A}\dist(u,v)$. \\

$\beta$
& Power-law exponent. This paper considers $\beta\in[2,3)$. \\

$N_k$
& Number of vertices having degree $k$:
$N_k=|\{v\in V:\deg(v)=k\}|=\Theta(nk^{-\beta})$. \\

$\Delta$
& Maximum degree in $G$, namely $\Delta=\max_{v\in V}\deg(v)$. \\

$\mathcal{W}=\{w_1,\ldots,w_n\}$
& Non-decreasing weight sequence in the Chung--Lu model. \\

$w_i$
& Weight, or expected degree, assigned to vertex $v_i$. \\

$W$
& Total vertex weight, defined as $W=\sum_{i=1}^{n}w_i$. \\

$\rho$
& Chung--Lu normalization factor, $\rho=1/W$, so that
$\Pr[(v_i,v_j)\in E]=w_iw_j\rho$. \\

$d$
& Average degree of $G$, defined as $d=2m/n$. \\

$\widetilde d$
& Second order average degree of $G$, defined as $\widetilde d =\frac{\Vol_2(V)}{\Vol(V)}$ \\

$\gamma$
& Core threshold exponent, defined as
$\gamma=1/\log\log n$. \\

$S_C$
& High-degree core:
$S_C=\{v\in V:\deg(v)\ge n^\gamma\}$. \\

$\Vol(S)$
& Degree volume of a vertex set $S$:
$\Vol(S)=\sum_{v\in S}\deg(v)$. \\

$S_{\mathrm{in}}(v)$
& Neighbours of $v$ that belong to the core:
$\{u\in\delta(v):\deg(u)\ge n^\gamma\}$. \\

$S_{\mathrm{out}}(v)$
& Neighbours of $v$ outside the core:
$\{u\in\delta(v):\deg(u)<n^\gamma\}$. \\

$D_{\delta(u)}$
& Total degree volume of the neighbours of $u$:
$D_{\delta(u)}=\sum_{v\in\delta(u)}\deg(v)$. \\

$\Bbb{T}_s,\Bbb{T}_t$
& Search trees rooted at $s$ and $t$, respectively. \\

$C(s),C(t)$
& First core vertices selected by the searches from $s$ and $t$. \\

$\mathrm{HQ}$
& High Priority Queue containing neighbours selected by the
$\frac34$-volume pruning rule. \\

$\mathrm{LQ}$
& Low Priority Queue containing the remaining neighbours. \\

$T_i$
& Depth-$i$ BFS frontier grown from the target vertex during the
approximation analysis. \\

$\epsilon$
& Expected approximation error contributed by a pruning step. \\

\end{longtable}

\section{Algorithm}

Our algorithm, \textsc{Pruned Bidirectional Search (PBS)} (which is given as pseudocode in~\cref{alg:PBS}) consists of 2 phases:
\begin{description}
    \item[Pruning Phase:]
    Similar to BiBFS, PBS first starts an bidirectional search on both $s$ and $t$. When exploring a vertex $u$, PBS sort $\delta(u)$ by degree and calculate the sum of their degree $D_{\delta(u)} = \sum_{v \in \delta(u)} \deg(v)$. Then, PBS takes neighbour vertices from high degree to low to a High Priority Queue (HQ) until the degree sum of taken vertices reaches $\frac{3}{4}D_{\delta(u)}$, and put other vertices in $\delta(u)$ to a Low Priority Queue (LQ). As long as their is any vertices in HQ, PBS only expand the search tree with vertices in HQ regardless of tree height. Note that we still pick the smallest height vertex when choosing next vertex inside HQ, which is similar to BFS.\\
    We note search tree of $s$ and $t$ by $\Bbb{T}_s$ and $\Bbb{T}_t$ respectively. PBS stops expanding $\Bbb{T}_i$ when (a) $\Bbb{T}_s$ intersects with $\Bbb{T}_t$, in which case it terminates and return; or (b) $\Bbb{T}_i$ reached a vertex $C(i)$ with degree not less than $n^\gamma$, where $\gamma = 1/\log\log n$. When PBS stops expanding both $\Bbb{T}_s$ and $\Bbb{T}_t$, it transit to Core-Routing Phase with $C(s)$ and $C(t)$.
    \item[Core-Routing Phase:]
    Once $\Bbb{T}_s$ and $\Bbb{T}_t$ both reached the core, PBS starts routing through the core. By a BiBFS from $C(s)$ and $C(t)$ but only search vertices with degree not less than $n^\gamma$. Since the core is connected, the shortest path inside the core will be eventually found. Then, PBS concatenates 3 segments, $s-C(s)-C(t)-t$, and return it as an approximated path.\\
    In extremely rare situations (never seen in experiments), it fails to find a path inside the core. Then it continues to expand $\Bbb{T}_s$ and $\Bbb{T}_t$ until all vertices in same connected component are visited, which is same as the worst case of BiBFS.
\end{description}

\begin{algorithm}[htbp]
\caption{Pruned Bidirectional Search}
\label{alg:PBS}
\begin{algorithmic}[1]
\Require A Graph $G$ source node $s$ and target node $t$ 
\State $\Bbb{T}(s) \gets \{s\}$, $\Bbb{T}(t) \gets \{t\}$
\State $C(s) \gets \emptyset$, $C(t) \gets \emptyset$
\While{$C(s) = \emptyset$ or $C(t) = \emptyset$ }
    \For {$i \in {s,t}$}
        \State Expand $\Bbb{T}(i)$ by one level
        \For {$u \in$ new vertices}
            \If{$u$ is in top 3/4 volume}
            \State HQ $\gets$ HQ + $u$
            \Else
            \State LQ $\gets$ LQ + $u$
            \EndIf
        \EndFor
        \State $C(i) \gets \Bbb{T}(i)\cap S_C$
    \EndFor
    \If{$\Bbb{T}(s)\cap\Bbb{T}(t) \neq \emptyset$}
        \State return FindPath($\Bbb{T}(s)$,$\Bbb{T}(t)$)
    \EndIf
\EndWhile

\State \Return $\mathcal{P}(s,C(s))$ + BiBFS($G$, $C(s)$, $C(t)$) + $\mathcal{P}(C(t),t)$ 
\end{algorithmic}
\end{algorithm}

\section{Theoretical Analysis}
In this section, we prove~\cref{theo:main}, by first establishing in~\cref{ssec:time} the claimed time complexity, before analyzing in~\cref{ssec:approx} the quality of the path returned by the algorithm. We then conclude the section by establishing~\cref{theo:preproc} in~\cref{ssec:preproc}.
\subsection{Time Complexity}\label{ssec:time}
\begin{enumerate}
    \item Pruning Phase \\
    If the algorithm terminates early, it takes less time than reaching the Core on both side. Hence, we can upper bound time complexity of this phase by time of reaching the Core. To analyze this, consider we are expanding the BFS tree $\Bbb{T}(s)$.  When exploring a new edge, the probability it reaches the Core is at least 
    $$
    p = \sum_{u \in S_C} \frac{\Vol(S_C)}{2m} \geq \frac{|S_c|\cdot n^\gamma} { n\cdot d}
    $$
    Since $\beta > 2$, we have the average degree of graph $$d = \sum_{k=1}^\infty \frac{1}{k^\beta} < \sum_{k=1}^\infty \frac{1}{k^2} = \frac{\pi^2}{6}$$
    Then we have
    $$
    p \geq \frac{n \cdot \frac{1}{n^{\gamma(\beta-1)}}}{n\cdot \frac{\pi^2}{6}} = \Theta\mleft(\frac{1}{n^{\gamma(\beta-1)}}\mright)
    $$
    After $O(1/p)$ steps, we reach the core with constant probability. Thus, expected time complexity of this phase is $O(n^{\gamma(\beta-1)}) = O(n^{\Theta(1/\log \log n)})$.

    \item Core-Routing Phase\\
    The size of the Core satisfies (the final inequality follows from $\beta\geq2$)
    $$
    \begin{aligned}
    |S_C|
    &= O\left(n\sum_{i=k}^{\infty}\frac{1}{i^\beta}\right)\\
    &\leq O\left(n\int_{k-1}^{\infty}x^{-\beta}\,dx\right) =O\left(\frac{n(k-1)^{1-\beta}}{\beta-1}\right)\\
    &=O\left(n^{1-\gamma(\beta-1)}\right)
     \leq O\left(n^{1-\gamma}\right),
    \end{aligned}
    $$
    Under such condition, BiBFS inside the core runs in $O(\sqrt{|S_C|} = O(n^{(1-\gamma)/2})$ expected time \cite{alon2024sublinear}.
\end{enumerate}

Therefore, the overall time complexity is $O(n^{(1-\gamma)/2})$.

\subsection{Approximation Quality}\label{ssec:approx}
To simplify our analysis, we want to show that 

\begin{claim} \label{claim:coreroute}
 When $2 \leq \beta < 3$ and doing BFS inside the Core on vertex $u$, 
 $$
 \Vol(\{v: v \in \delta(u), v\in S_C\}) \geq \frac{3}{4} \Vol(\delta(u))
 $$ 
\end{claim}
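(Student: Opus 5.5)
We plan to compare, for a fixed core vertex $u$ (so $w_u\ge t=n^\gamma$), the expected volume that $\delta(u)$ places inside $S_C$ with the expected volume it places outside, and then argue concentration. In the Chung--Lu model each $v_i$ is a neighbour of $u$ independently with probability $w_uw_i/W$. Hence
\[
\mathbb{E}\bigl[\Vol(\delta(u)\cap S_C)\bigr]=\frac{w_u}{W}\Vol_2(S_C),
\qquad
\mathbb{E}\bigl[\Vol(\delta(u)\setminus S_C)\bigr]=\frac{w_u}{W}\Vol_2(V\setminus S_C).
\]
The claim therefore reduces, in expectation, to the deterministic inequality $\Vol_2(S_C)\ge 3\,\Vol_2(V\setminus S_C)$. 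This is a statement only about the power-law weight sequence of \cref{defn:powerlaw}.

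To prove that inequality we would use the tail $|\{v_i:w_i\ge x\}|=\Theta(nx^{1-\beta})$, which gives a weight density of order $nx^{-\beta}$. For $2<\beta<3$ the second moment is dominated by the top of the range. On one side,
\[
\Vol_2(V\setminus S_C)\approx n\int_1^{t}x^{2-\beta}\,dx=\Theta\bigl(nt^{3-\beta}\bigr).
\]
On the other side, with $w_{\max}=\Theta(n^{1/(\beta-1)})$ (up to the cap $w_{\max}^2\le W$),
\[
\Vol_2(S_C)\approx n\int_t^{w_{\max}}x^{2-\beta}\,dx=\Theta\bigl(nw_{\max}^{3-\beta}\bigr).
\]
The ratio is therefore $(w_{\max}/t)^{3-\beta}$. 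Since $t=n^{1/\log\log n}$ is $n^{o(1)}$ while $w_{\max}$ is polynomial in $n$, this ratio tends to infinity, so it eventually exceeds $3$ for every fixed $\beta\in(2,3)$. For $\beta=2$ we would rerun the same computation with the integrand $x^0$, which still gives a ratio of order $w_{\max}/t\to\infty$.

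Next we need concentration. The volume landing outside the core is a sum of independent terms $w_i\mathbf{1}[v_i\in\delta(u)]$, each bounded by $t$. We would apply a Chernoff/Bernstein bound, in the style of the Chung--Lu lemmas used for \cref{lem:set-edge}, to show it is at most twice its mean with high probability. The main obstacle is the core side, because $\Vol_2(S_C)$ is dominated by a few huge-weight vertices, so a naive bound with summands as large as $w_{\max}$ is too weak. We would first split the core into dyadic weight classes $[2^j,2^{j+1})$. For classes with $w_uw_i\ge W$ the edge probability is $1$ (or close to it), so those neighbours are present deterministically and already contribute most of the volume. For the remaining classes a per-class Chernoff bound applies, since each class's expected count grows with $w_u\ge n^\gamma$.

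Finally, we take a union bound over the at most $|S_C|=n^{1-\gamma(\beta-1)+o(1)}$ core vertices $u$ visited during the core BFS, using \cref{lem:core} for the size bound. This gives the inequality simultaneously for all of them with high probability. The remaining gap is converting the paper's $\Vol$, which is stated in degrees, to weights. For core vertices degrees concentrate around their weights, so this should only cost a $(1+o(1))$ factor, and that factor is absorbed by the diverging ratio $(w_{\max}/t)^{3-\beta}$.
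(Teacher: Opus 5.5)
Your computation of the ratio of expectations, $\Vol_2(S_C)/\Vol_2(V\setminus S_C)\approx(w_{\max}/t)^{3-\beta}$, is exactly the paper's proof. The paper stops there and simply asserts that the inequality holds with high probability.

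The concentration step you add to justify that assertion fails as planned, and it fails in a way that shows the ratio of means is the wrong quantity. The model imposes $w_{\max}^2\le W$, so no pair has $w_uw_i\ge W$, except possibly two vertices of maximal weight. For a core vertex with $w_u$ close to $t=n^{o(1)}$, every edge probability is at most $t/\sqrt{W}$, so there is no deterministic top class. Worse, with $W=\Theta(n)$ the expected number of neighbours of $u$ in the class $[2^j,2^{j+1})$ is $\Theta(w_u2^{j(2-\beta)})$. This \emph{decreases} in $j$ and is $o(1)$ once $2^j\gg w_u^{1/(\beta-2)}$. So the classes near $w_{\max}$, which carry essentially all of $\mathbb{E}[\Vol(\delta(u)\cap S_C)]\approx w_uw_{\max}^{3-\beta}=n^{\Omega(1)}$, typically contain no neighbour of $u$. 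No per-class Chernoff bound applies to them. For $w_u\approx t$ the realized core volume is only about $t^{1/(\beta-2)}=n^{o(1)}$, up to polylogarithmic factors, which is far below its mean.

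The claim survives, but you have to compare against a truncated sum. Keep only the core classes with $2^j\le M_u:=\min\{w_{\max},(w_u/(C\log n))^{1/(\beta-2)}\}$, where the expected count is at least $C\log n$; note $M_u\ge t$ because $t^{3-\beta}\gg\log n$. A Chernoff bound per class, plus a union bound over the $O(\log n)$ classes and all $u\in S_C$, gives $\Vol(\delta(u)\cap S_C)=\Omega(w_uM_u^{3-\beta})$ w.h.p.

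Your Bernstein bound on the non-core side is fine. It gives $O(w_ut^{3-\beta})$, since the mean over the maximal summand is $w_ut^{2-\beta}\ge t^{3-\beta}\gg\log n$. The ratio is then $\Omega((M_u/t)^{3-\beta})$. This is minimized at $w_u=t$, where it equals $t^{(3-\beta)^2/(\beta-2)}(C\log n)^{-(3-\beta)/(\beta-2)}\to\infty$, because $t=n^{1/\log\log n}$ outgrows every fixed power of $\log n$.

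At $\beta=2$ every class has expected count $\Theta(w_un/W)=\Theta(w_u/\log n)\gg\log n$, so the per-class Chernoff part of your plan covers all classes there. Finally, the degree-to-weight conversion is also needed on the non-core side. There, degrees of low-weight neighbours do not concentrate individually; only their sum does.
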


\begin{proof}
    For any vertex in the core $v \in S_C$, we have $\deg(v) \geq n^\gamma$. We define $S_{in}(v) = \{u: u\in\delta(v), \deg(u)\geq n^\gamma\}$ as neighbours of $u$ in the core, and $S_{out}(v) = \{u: u\in\delta(v), \deg(u)< n^\gamma\}$ respectively. \\
    Under the graph generation model, An edge between $(u,v)$ is formed with probability $\frac{\deg(u)\deg(v)}{W}$.
    Then the expected volume of $S_{in}(v)$ and $S_{out}(v)$ are 
    $$
    \begin{aligned}
        \mathbb E[\Vol(S_{\rm in}(v))]
        &=
        \frac{\deg(v)}{W}
        \sum_{\substack{u\in V\\ \deg(u)\ge t}}\deg(u)^2,\\
        \mathbb E[\Vol(S_{\rm out}(v))]
        &=
        \frac{\deg(v)}{W}
        \sum_{\substack{u\in V\\ \deg(u)<t}}\deg(u)^2.
    \end{aligned}
    $$
    Cancelling same factors gives us
    $$
        \frac{\mathbb E[\Vol(S_{\rm in}(v))]}{\mathbb E[\Vol(S_{\rm out}(v))]}
        =
        \frac{\sum_{\deg(u)\ge n^\gamma}\deg(u)^2}{\sum_{\deg(u)<n^\gamma}\deg(u)^2}
    $$
    By taking in the number of vertices with degree $k$ is $N_k = \Theta(nk^{-\beta})$ and letting $\Delta = n^{\Theta(1)}$ to be highest degree, we get
    $$
        \frac{\mathbb E[\Vol(S_{\rm in}(v))]}{\mathbb E[\Vol(S_{\rm out}(v))]}
        = \frac{\Theta(n\cdot(\Delta^{3-\beta}-n^{\gamma(3-\beta)}))}{\Theta(n\cdot n^{\gamma(3-\beta)})}
        = \Theta((\frac{\Delta}{n^\gamma})^{3-\beta} ) = n^{\Theta(1-\gamma)}
    $$
    Then, with high probability, the volume ratio $\frac{\Vol(S_{\rm in}(v))}{\Vol(\delta(v))} \geq \frac{3}{4}$
\end{proof}

By \cref{claim:coreroute}, the approximation quality of PBS is not worse than purely using $\frac{3}{4}$ pruned search from $s,t$ until they intersect as Core-Routing phase has a bigger search space. So, we can simplify our analysis by treating both phases as $\frac{3}{4}$ pruning search.

Consider any step expanding $\Bbb{T}(s)$ at vertex $u$, note the set of vertices will be pushed into HQ as $A$, and the set of vertices will be pushed into LQ as $B$. Approximation error can only introduced when $A$ and $B$ both have a path to $t$, and $\mathcal{P}(B,t)$ is shorter than $\mathcal{P}(A,t)$.\\
To compare $\dist(A,t)$ and $\dist(B,t)$, imagine we grow a classic BST from $t$, note as $T_i$ at $i$-th depth. Conditioned on $T_i$, the probability of $T_{i+1}$ reaches $A$ and $B$ is 
$$
    \mathrm{Pr}[T_{i+1} \cap A \neq \emptyset] = 1 - \prod_{a \in A} (1-w_a\Vol(T)\rho) \doteq \Vol(A)\Vol(T)\rho
$$

$$
    \mathrm{Pr}[T_{i+1} \cap B \neq \emptyset] = 1 - \prod_{b \in B} (1-w_b\Vol(T)\rho) \doteq \Vol(B)\Vol(T)\rho
$$
If an error was introduced, its volume is $\dist(A,T_{i+1})$ in $G \backslash u$. By \cref{lem:set-edge}, 
$$
    \mathrm{Pr}[\dist(A,T_{i+1}) > 1] < e^{-\Vol(A)\Vol(T_{i+1})\rho} < e^{-\Vol(A)\Vol(T_{i})d\rho}
$$
Conditioned on reaching $A$ or $B$ at $T_{i+1}$, we have 
$$
\mathrm{Pr}[\dist(A,T_{i+1} ) \leq 1 | T_{i+1} \cap (A\cup B) \neq \emptyset] > \frac{1-e^{-\Vol(A)\Vol(T_{i+1})\rho}}{\Vol(A\cup B)\Vol(T)\rho} \geq \frac{\Vol(A)\Vol(T_{i})d\rho}{\Vol(A\cup B)\Vol(T_{i})\rho} \doteq \frac{3d}{4}
$$
Also, we can bound the probability of $\dist(A,T_{i+1} ) > k$ by further expanding $T$. Since $\beta < 3$, we have average degree $d = \zeta(\beta) > \zeta(3) > 1.2$. Applying this, we have the expectation of $\dist(A,T_{i+1})$ when $T_{i+1}$ intersect with $B$
$$
    \begin{aligned}
    \Bbb{E}[\dist(A,T_{i+1})] &\leq \sum_{k=1} k\cdot(\mathrm{Pr}[\dist(A,T_{i+1} )  \\
    &\leq k] - \mathrm{Pr}[\dist(A,T_{i+1} ) \leq k-1]) \\
    &\leq \frac{3}{4}+ \sum_{k=1}\frac{1}{4\cdot 1.2^k}  \leq \frac{9}{4}
    \end{aligned}
$$
The probability of $T_{i+1}$ intersect only with $B$ is approximately
$$
    \begin{aligned}
    \mathrm{Pr}[T_{i+1} \cap B \neq \emptyset,T_{i+1} \cap A = \emptyset] &\doteq \frac{\Vol(B)}{\Vol(A\cup B)} 
    \cdot (1 - \frac{\Vol(A)}{\Vol(A\cup B)})\cdot \mathrm{Pr}[T_{i+1} \cap (A\cup B) \neq \emptyset] \\
    &\geq \frac{1}{16}\mathrm{Pr}[T_{i+1} \cap (A\cup B) \neq \emptyset]
    \end{aligned}
$$
And we can recursively perform this analysis when $T_{i+1}$ reaches neither. Then we have the expected error introduced on each node along the path is
$$
    \epsilon = \sum \frac{1}{16}\mathrm{Pr}[T_{i+1} \cap (A\cup B) \neq \emptyset]\cdot \Bbb{E}[\dist(A,T_{i+1})] \leq \frac{9}{64}
$$
Here we can conclude PBS returns a $(1+\frac{9}{32})$ approximation path with high probability.

\subsection{Proof of~\cref{theo:preproc}}\label{ssec:preproc}
To conclude this section, we establish~\cref{theo:preproc}, by analyzing how a preprocessing stage can improve the query time.

By \cref{lem:core}, there is almost surely a unique connected core in $G$, with size of $O(n^{1-\gamma})$. Getting all vertices in the core require $O(n)$ queries. Pruned Landmark Labeling \cite{akiba2013pruned} could run on the core with $O(n^{2-\gamma})$ time complexity. Having index built on the core, Core-Routing phase can be done in $O(1)$ time. Therefore, the overall time complexity for one path query is $n^{\Theta(1/\log \log n)}$ and the approximation guarantee holds.

\section{Experiments}
\label{sec:experiments}

\paragraph{Setup.}
We evaluate \textsc{PrunedBidirectionalSearch} (\textsc{PBS}) on five
real-world networks from SNAP and three synthetic power-law graphs. All source code is publicly available at \url{https://github.com/JiaqiM26/PrunedBidirectionalSearch}. We used the implementation of \textsc{BiBFS} and \textsc{Wormhole} from Basu et al. \cite{basu2025sublinear}.
Each synthetic
graph contains one million vertices, with power-law exponent
$\beta \in \{2.0,2.5,2.9\}$. We execute 1,000 random source--target
queries per graph. Exact bidirectional BFS (\textsc{BiBFS}) provides the
ground-truth distances. We report total query time and the fraction of
answers within $k$ hops of the exact distance, denoted by $+k$ Success Rate.

\begin{figure}[t]
    \centering
    \includegraphics[width=\linewidth]
        {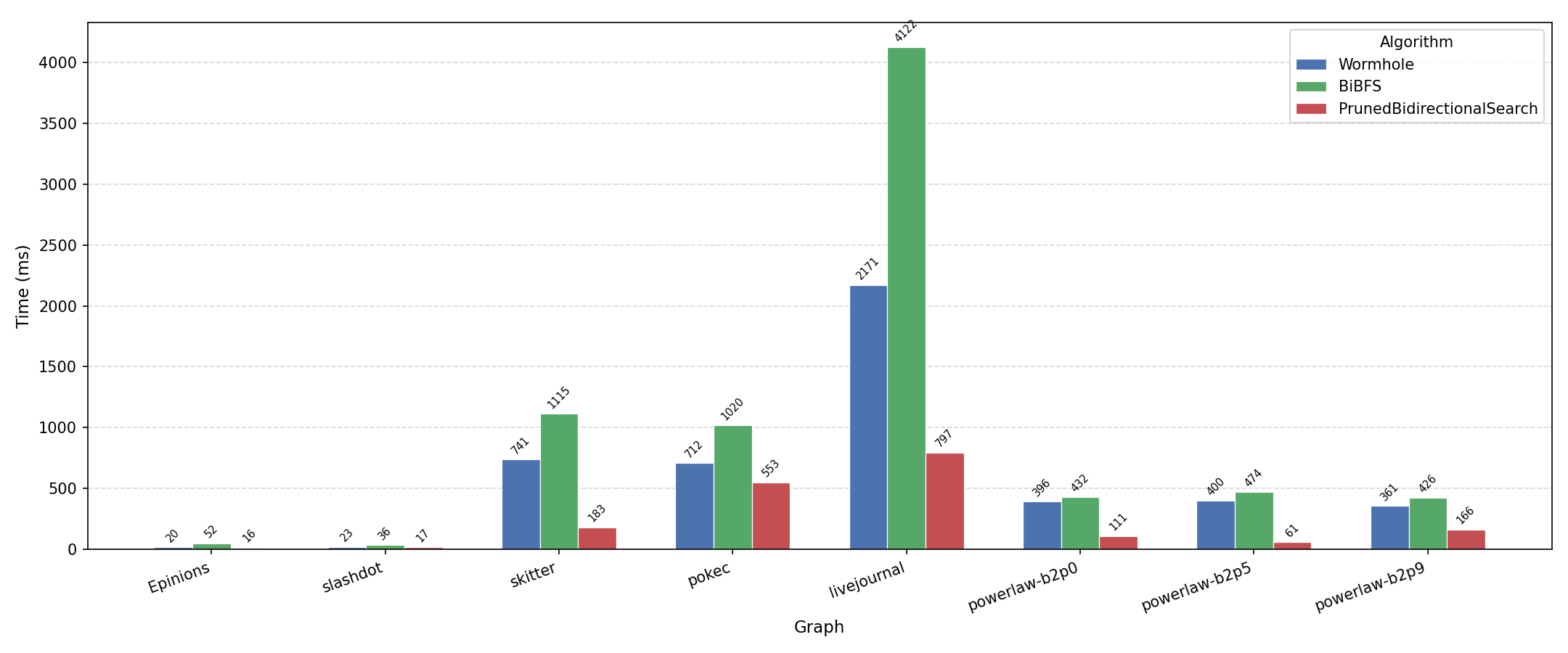}
    \caption{Average running time for 1,000 queries on each graph.}
    \label{fig:running-time}
\end{figure}

\paragraph{Results.}
The experimental results in \cref{fig:running-time,tab:query-results}
show that \textsc{PBS} is the fastest method on every graph. It achieves
a $1.84\times$--$7.76\times$ speedup over \textsc{BiBFS} and a
$1.28\times$--$6.55\times$ speedup over \textsc{Wormhole}. Its average
additive error ranges from $0.031$ to $0.321$ hops, while its average
multiplicative error ranges from $1.008\times$ to $1.050\times$. At
least $96.2\%$ of its estimates are within one hop of the exact distance,
and at least $99.5\%$ are within two hops.

\begin{table*}[t]
\centering
\small
\setlength{\tabcolsep}{2.7pt}
\begin{tabular}{lrrrrrrrrrrr}
\toprule
& \multicolumn{1}{c}{\textsc{BiBFS}}
& \multicolumn{5}{c}{\textsc{Wormhole}}
& \multicolumn{5}{c}{\textsc{PrunedBidirectionalSearch}}\\
\cmidrule(lr){2-2}\cmidrule(lr){3-7}\cmidrule(lr){8-12}
Graph
& Time
& Time & $+0$ & $+1$ & $+2$ & AME
& Time & $+0$ & $+1$ & $+2$ & AME\\
\midrule
Epinions
& 52.4
& 20.2 & \SRcell{.980} & \SRcell{1.000} & \SRcell{1.000} & 1.006
& \textbf{15.8} & \SRcell{.968} & \SRcell{1.000} & \SRcell{1.000} & 1.008\\
Slashdot
& 35.9
& 22.5 & \SRcell{.967} & \SRcell{.999} & \SRcell{1.000} & 1.010
& \textbf{17.4} & \SRcell{.942} & \SRcell{1.000} & \SRcell{1.000} & 1.016\\
Skitter
& 1115.1
& 740.7 & \SRcell{.942} & \SRcell{.999} & \SRcell{1.000} & 1.013
& \textbf{183.1} & \SRcell{.882} & \SRcell{.998} & \SRcell{1.000} & 1.025\\
Pokec
& 1020.0
& 711.9 & \SRcell{.516} & \SRcell{.923} & \SRcell{.998} & 1.128
& \textbf{553.0} & \SRcell{.867} & \SRcell{.996} & \SRcell{1.000} & 1.032\\
LiveJournal
& 4122.2
& 2171.5 & \SRcell{.714} & \SRcell{.982} & \SRcell{.998} & 1.058
& \textbf{796.6} & \SRcell{.832} & \SRcell{.995} & \SRcell{1.000} & 1.034\\
Power-law $\beta=2.0$
& 432.0
& 395.5 & \SRcell{.988} & \SRcell{1.000} & \SRcell{1.000} & 1.003
& \textbf{111.2} & \SRcell{.969} & \SRcell{1.000} & \SRcell{1.000} & 1.009\\
Power-law $\beta=2.5$
& 473.7
& 400.1 & \SRcell{.932} & \SRcell{.998} & \SRcell{1.000} & 1.015
& \textbf{61.1} & \SRcell{.859} & \SRcell{.998} & \SRcell{1.000} & 1.030\\
Power-law $\beta=2.9$
& 426.3
& 360.7 & \SRcell{.683} & \SRcell{.941} & \SRcell{.997} & 1.056
& \textbf{165.9} & \SRcell{.723} & \SRcell{.962} & \SRcell{.995} & 1.050\\
\bottomrule
\end{tabular}
\caption{Total running time in milliseconds for 1,000 queries, success
rates within 0, 1, and 2 hops of the exact distance, and average
multiplicative error (AME). Bold values indicate the lowest runtime for
each graph.}
\label{tab:query-results}
\end{table*}

\printbibliography

@article{
pnas/2002/CL,
author = {Fan Chung  and Linyuan Lu },
title = {The average distances in random graphs with given expected degrees},
journal = {Proceedings of the National Academy of Sciences},
volume = {99},
number = {25},
pages = {15879-15882},
year = {2002},
doi = {10.1073/pnas.252631999},
URL = {https://www.pnas.org/doi/abs/10.1073/pnas.252631999},
eprint = {https://www.pnas.org/doi/pdf/10.1073/pnas.252631999}}

@phdthesis{pohl1969bidirectional,
  author       = {Ira Pohl},
  title        = {Bi-directional and heuristic search in path problems},
  school       = {Stanford University, {USA}},
  year         = {1969},
  url          = {https://searchworks.stanford.edu/view/2197829},
  bibsource    = {dblp computer science bibliography, https://dblp.org}
}

@article{luby1989bidirectional,
  author       = {Michael Luby and
                  Prabhakar Ragde},
  title        = {A Bidirectional Shortest-Path Algorithm with Good Average-Case Behavior},
  journal      = {Algorithmica},
  volume       = {4},
  number       = {4},
  pages        = {551--567},
  year         = {1989},
  url          = {https://doi.org/10.1007/BF01553908},
  doi          = {10.1007/BF01553908},
  bibsource    = {dblp computer science bibliography, https://dblp.org}
}

@inproceedings{alon2024sublinear,
  author       = {Noga Alon and
                  Allan Gr{\o}nlund and
                  S{\o}ren Fuglede J{\o}rgensen and
                  Kasper Green Larsen},
  editor       = {Rastislav Kr{\'{a}}lovic and
                  Anton{\'{\i}}n Kucera},
  title        = {Sublinear Time Shortest Path in Expander Graphs},
  booktitle    = {49th International Symposium on Mathematical Foundations of Computer
                  Science, {MFCS} 2024, Bratislava, Slovakia, August 26-30, 2024},
  series       = {LIPIcs},
  volume       = {306},
  pages        = {8:1--8:13},
  publisher    = {Schloss Dagstuhl - Leibniz-Zentrum f{\"{u}}r Informatik},
  year         = {2024},
  url          = {https://doi.org/10.4230/LIPIcs.MFCS.2024.8},
  doi          = {10.4230/LIPICS.MFCS.2024.8},
  bibsource    = {dblp computer science bibliography, https://dblp.org}
}

@inproceedings{akiba2013pruned,
  author       = {Takuya Akiba and
                  Yoichi Iwata and
                  Yuichi Yoshida},
  editor       = {Kenneth A. Ross and
                  Divesh Srivastava and
                  Dimitris Papadias},
  title        = {Fast exact shortest-path distance queries on large networks by pruned
                  landmark labeling},
  booktitle    = {Proceedings of the {ACM} {SIGMOD} International Conference on Management
                  of Data, {SIGMOD} 2013, New York, NY, USA, June 22-27, 2013},
  pages        = {349--360},
  publisher    = {{ACM}},
  year         = {2013},
  url          = {https://doi.org/10.1145/2463676.2465315},
  doi          = {10.1145/2463676.2465315},
  bibsource    = {dblp computer science bibliography, https://dblp.org}
}

@inproceedings{faloutsos1999powerlaw,
  author       = {Michalis Faloutsos and
                  Petros Faloutsos and
                  Christos Faloutsos},
  editor       = {Lyman Chapin and
                  James P. G. Sterbenz and
                  Guru M. Parulkar and
                  Jonathan S. Turner},
  title        = {On Power-law Relationships of the Internet Topology},
  booktitle    = {Proceedings of the {ACM} {SIGCOMM} 1999 Conference on Applications,
                  Technologies, Architectures, and Protocols for Computer Communication,
                  August 30 - September 3, 1999, Cambridge, Massachusetts, {USA}},
  pages        = {251--262},
  publisher    = {{ACM}},
  year         = {1999},
  url          = {https://doi.org/10.1145/316188.316229},
  doi          = {10.1145/316188.316229},
  bibsource    = {dblp computer science bibliography, https://dblp.org}
}

@inproceedings{basu2025sublinear,
  author       = {Sabyasachi Basu and
                  Nadia Koshima and
                  Talya Eden and
                  Omri Ben{-}Eliezer and
                  C. Seshadhri},
  editor       = {Wolfgang Nejdl and
                  S{\"{o}}ren Auer and
                  Meeyoung Cha and
                  Marie{-}Francine Moens and
                  Marc Najork},
  title        = {A Sublinear Algorithm for Approximate Shortest Paths in Large Networks},
  booktitle    = {Proceedings of the Eighteenth {ACM} International Conference on Web
                  Search and Data Mining, {WSDM} 2025, Hannover, Germany, March 10-14,
                  2025},
  pages        = {20--29},
  publisher    = {{ACM}},
  year         = {2025},
  url          = {https://doi.org/10.1145/3701551.3703512},
  doi          = {10.1145/3701551.3703512},
  bibsource    = {dblp computer science bibliography, https://dblp.org}
}

@phdthesis{ Lu2002ProbabilisticMI,
  title={Probabilistic methods in massive graphs and internet computing},
  author={Linyuan Lu and Fan Chung Graham},
  year={2002},
  school={University of California, San Diego},
  url={https://search-library.ucsd.edu/permalink/01UCS_SDI/1be9fsd/alma991021291089706535}
}

@article{Artico2020PowerLaw,
    author = {Artico, I. and Smolyarenko, I. and Vinciotti, V. and Wit, E. C.},
    title = {How rare are power-law networks really?},
    journal = {Proceedings of the Royal Society A: Mathematical, Physical and Engineering Sciences},
    volume = {476},
    number = {2241},
    pages = {20190742},
    year = {2020},
    month = {09},
    issn = {1364-5021},
    doi = {10.1098/rspa.2019.0742},
    url = {https://doi.org/10.1098/rspa.2019.0742},
    eprint = {https://royalsocietypublishing.org/rspa/article-pdf/doi/10.1098/rspa.2019.0742/637460/rspa.2019.0742.pdf},
}

\end{document}